\newlength{\parindentsave}\setlength{\parindentsave}{\parindent}
\begin{document}

\begin{frontmatter}

\title{An $O^*(1.1939^n)$ time algorithm for minimum weighted dominating induced matching}

\author{
Min Chih Lin \thanks{Partially supported by UBACyT Grants 20020100100754 and 20020090100149, PICT ANPCyT Grant 1970 and PIP CONICET Grant 11220100100310.} \inst{1},
Michel J. Mizrahi$^\star$ \inst{1} \and \\
Jayme L. Szwarcfiter \thanks{Partially supported by CNPq, CAPES and FAPERJ, research agencies. Presently visiting the Instituto Nacional de Metrologia, Qualidade e Tecnologia, Brazil.} \inst{2}
}

\institute{CONICET, Instituto de C\'alculo and Departamento de Computaci\'on\\ Universidad de Buenos Aires\\ Buenos Aires, Argentina
 \and
Inst de Matem\'atica, COPPE and NCE  \\ Universidade Federal do Rio de Janeiro \\
Rio de Janeiro, Brazil \\
\email{oscarlin@dc.uba.ar}, \email{michel.mizrahi@gmail.com}, \email{jayme@nce.ufrj.br}}

\maketitle              

\begin{abstract}
Say that an edge of a graph $G$ dominates itself and every other
edge adjacent to it. An edge dominating set of a graph $G=(V,E)$
is a subset of edges $E' \subseteq E$ which dominates all edges of
$G$. In particular, if every edge of $G$ is dominated by exactly
one edge of $E'$ then $E'$ is a dominating induced matching. It is
known that not every graph admits a dominating induced matching,
while the problem to decide if it does admit it is NP-complete. In
this paper we consider the problems of finding a minimum weighted
dominating induced matching, if any, and counting the number of dominating induced 
matchings of a graph with weighted edges. We describe an exact algorithm for 
general graphs that runs in $O^*(1.1939^n)$ time and polynomial (linear) space. 
This improves over any existing exact algorithm for the problems in consideration.
\end{abstract}

\keywordname{exact algorithms, dominating induced matchings, branch $\&$ reduce}

\end{frontmatter}

\pagestyle{plain}
\setcounter{page}{1}
\pagenumbering{arabic}
\section{Introduction}

Under the widely accepted assumption that $P \ne NP$ there are several problems with important applications for which no polynomial 
algorithm exists. 
The need to get an exact solution for many of those problems has lead to a growing interest in the area of design and analysis of 
exact exponential time algorithms for NP-Hard problems \cite{Fo-Kr,Woeginger:exactalgorithms}. Even a slight improvement of the base of the exponential running time
may increase the size of the instances being tractable. 
There has been many new and promising advances in recent years towards this direction \cite{Bjorklund2010,Bjorklund2007}.

In this paper we give an exact algorithm for the weighted and counting version of the NP-Hard problem Dominating Induced Matching
(also known as DIM or Efficient Edge Domination) which has been extensively studied 
\cite{Ca-Ce-De-Si,Do-Lo,Korpel,Lu-Ko-Ta,Lu-Ta,Br-Le-Ra,Br-Mo,Br-Hu-Ne,Ca-Ko-Lo}. Further notes about this problem and some applications 
related to encoding theory, network routing and resource allocation can be found in \cite{Gr-Sl-Sh-Ho,Livingston}.

The unweighted version of the dominating induced matching
problem is known to be NP-complete \cite{Gr-Sl-Sh-Ho}, even for
planar bipartite graphs of maximum degree 3 \cite{Br-Hu-Ne} or regular graphs
\cite{Ca-Ce-De-Si}. There are polynomial time algorithms for some
classes, such as chordal graphs \cite{Lu-Ko-Ta}, generalized
series-parallel graphs \cite{Lu-Ko-Ta} (both for the weighted
problem), claw-free graphs \cite{Ca-Ko-Lo}, graphs with bounded
clique-width \cite{Ca-Ko-Lo}, hole-free graphs \cite{Br-Hu-Ne}, 
convex graphs \cite{Korpel}, dually-chordal graphs \cite{Br-Le-Ra}, 
$P_7$-free graphs \cite{Br-Mo}, bipartite permutation graphs \cite{Lu-Ta} (see also \cite{Br-Lo}).

If $P \ne NP$ it is not possible to solve this problem in polynomial time, hence it becomes
important to improve the exponential algorithm in order to identify the instances that can be solved 
within reasonable time. 

A straightforward brute-force algorithm to solve weighted DIM can be achieved in $O^*(2^n)$ time and polynomial space.

The minimum weighted DIM problem can be expressed as an instance
of the maximum weighted independent set problem on the square of
the line graph $L(G)$ of $G$, and also as an instance of the
minimum weighted dominating set problem on $L(G)$, by slightly
way described in \cite{Br-Le-Ra,Milanic} for unweighted DIM problem.

The minimum weighted dominating set can be solved in $O^*(1.5780^n)$ time \cite{Fomin05boundingthe}, while the maximum weight independent set can be 
solved in $O^*(1.4423^n)$ time by enumeration of all maximal independent sets \cite{Ts}. To the best of our knowledge there are no better method to obtain
the maximum weighted independent set (a better algorithm $O^*(1.2209^n)$ for unweighted  maximum independent set is due by \cite{Fo-Gr-Kr}).
Hence the DIM problem for a graph $G$ can be solved by using this algorithm in $L^2(G)$, which runs in $O^*(1.4423^m)$ time.

For the minimum weighted DIM this algorithm behaves better than the brute-force alternative whenever $1.4423^m < 2^n$, this is, $m < 1.8926n$.  

The paper \cite{DBLPMin} shows how to solve the DIM problem in $O^*(1.7818^n)$ time and polynomial space
while the same algorithm runs in $O(n+m)$ time if the graph has a fixed dominating set. 
In the same work another approach based on enumerating maximal independent sets was developed and allows to solve both DIM problems 
(minimum weighted problem and counting problem) in $O^*(1.4423^n)$ time and polynomial space.
For the counting problem, there exists algorithms such as \cite{Dahllof2002} which can be used to count the number
of MWIS's in $O^*(1.3247^n)$ time, leading an $O^*(1.3247^m)$ time and polynomial space algorithm to count the numbers of DIM's. 

Comparing with the straightforward brute-force algorithm, it is convenient to use it as long as $1.3247^m < 2^n$, and this occurs whenever $m < 2.4650n$. 

There are NP-complete instances of the DIM problem where the number of edges in G is relatively low such as for planar bipartite graphs of maximum degree 3 \cite{Br-Hu-Ne}, where $m \leq 1.5n$. Therefore using transformations and exact algorithms for MWIS or for counting MWIS's is better than using the brute-force
algorithm. Note however that cases where brute-force algorithm is not convenient strongly relies on the number edges. For instance, for a graph with $O(n)$ edges
such that $m \approx 3n$ the MWIS algorithm behaves better than the brute-force one.

In this paper, we propose an algorithm for solving the weighted
DIM problem and the counting DIM's problem, in $O(m \cdot 1.1939^n) \in O^*(1.1939^n)$ time and $O(m)$ space in general graphs 
which improves over the existing algorithms for these problems. We employ techniques described in \cite{Fo-Kr} for the analysis of our algorithm,
and as such we use their terminology. 

The proposed algorithm was designed using the {\it branching $\&$ reduce} paradigm. More information about this design technique as well as the running time analysis
for this kind of algorithms can be found in \cite{Fo-Kr}. 

\section{Preliminaries}

By $G(V,E)$ we denote a \emph{simple undirected graph} with vertex
set $V$ and edge set $E$, $n=|V|$ and $m=|E|$. We consider $G$ as
a \emph{weighted} graph, that is, one in which there is a
non-negative real value, denoted $weight(vw)$ assigned to each
edge $vw$ of $G$. If $v \in V$ and $V' \subseteq V$, then denote
by $N(v)$, the set of vertices adjacent (neighbors) to $v$, 
denote $d(v) = |N(v)|$ the degree of the vertex, denote
by $G[V']$ the subgraph of $G$ induced by $V'$, and write
$N_{V'}(v)=N(v) \cap V'$. Some special graphs or vertices are of
interest for our purposes. A graph formed by two triangles having
a common edge is called a {\it diamond}. By
removing an edge incident to a  vertex of degree 2 of a diamond,
we obtain a {\it paw}. Finally, a vertex of degree 1 is called
{\it pendant}.

Given an edge $e \in E$, say that $e$ \emph{dominates} itself and
every edge sharing a vertex with $e$. Subset $E'\subseteq E$ is an
\emph{induced matching} of $G$ if each edge of $G$ is dominated by
at most one edge in $E'$. A \emph{dominating induced matching
(DIM)} of $G$ is a subset of edges which is both dominating and an
induced matching. Not every graph admits a DIM, and the problem of
determining whether a graph admits it is also known in  the
literature as \emph{efficient edge  domination problem}. The
weighted version of DIM problem is to find a DIM such that the sum
of weights of its edges is minimum among all DIM's, if any.
The counting version of the problem consists on counting the amount
of DIM's a graph has. It is easy to see that the weighted version
and the counting version of the problem are harder than the unweighted one.
If the graph $G$ has negative weights the problem can be solved using
the same algorithm that solves the problem for non-negative weights.
Let $-M$ be the minimum weight among all edges of $G$, modify the weights
of $G$ by adding $M$ to the weights of all edges.
It is not hard to see that every DIM is a maximum induced matching, and hence the number of edges of every DIM in $G$ is the same.
Therefore the optimal solution for the modified graph is the same that the optimal solution for the original graph.

We assume the graph $G$ to be connected, otherwise, the DIM of $G$ is
the union of the DIM's of its connected component, and so we can
restrict to the  connected case.

We will use an alternative definition \cite{Do-Lo} of the problem
of finding a dominating induced matching. It asks to determine if
the vertex set of a graph $G$ admits a partition into two subsets.
The vertices of the first subset are called {\it white} and induce
an independent set of the graph, while those of the second subset
are named {\it black} and induce an 1-regular graph.

A straightforward brute-force algorithm for finding the DIM of a graph $G$
consists in finding all bipartitions of $V(G)$, color one of the parts as {\it white}, the
other as {\it black}, and checking if the result is a valid DIM. The complexity of
this algorithm is $O(2^n \cdot m) \in O^*(2^n)$.

\section{Extensions of Colorings}\label{sec-extensions}

Assigning one of the two possible colors, white or black, to vertices of $G$ is called a coloring of $G$. A coloring is
\emph{partial} if only part of the vertices of G have been assigned colors, otherwise it is \emph{total}. A black vertex is
{\it single} if it has no black neighbor, and is paired if it has exactly one black neighbor. Each coloring, partial or total, can
be {\it valid} or {\it invalid}.

Next, we describe the natural  conditions for determining if a coloring is valid or invalid.

\begin{definition}:\label{lemma-validation} RULES FOR VALIDATING COLORINGS: \\
The following are necessary and sufficient conditions for a
coloring to be valid: \bigskip

A partial coloring is valid whenever:
\begin{description}
\item [ V1.\label{itm:validI}] No two white vertices are adjacent, and 
\item [ V2.\label{itm:validII}] Each black vertex is either single or paired. 
Each single vertex has some uncolored neighbor.
\end{description}
\bigskip

A total coloring is valid whenever:
\begin{description}
\item [ V3,\label{itm:valid1} ] No two white vertices are adjacent, and 
\item [ V4.\label{itm:valid2} ]Each black vertex is paired.
\end{description}
\end{definition}

\begin{lemma}\label{cor-totalvalid}
There is a one-to-one correspondence between total valid colorings
and dominating induced matchings of a graph.
\end{lemma}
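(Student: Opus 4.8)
The plan is to exhibit an explicit bijection between dominating induced matchings and total valid colorings, and to verify that both maps are well defined and mutually inverse. Given a DIM $E'$, I would define a coloring $\phi(E')$ by painting a vertex \emph{black} precisely when it is an endpoint of some edge of $E'$, and \emph{white} otherwise. Conversely, given a total valid coloring $C$, I would define $\psi(C)$ to be the set of edges of $G$ both of whose endpoints are black. The whole proof then consists of checking that $\phi$ lands in valid total colorings, that $\psi$ lands in DIMs, and that $\psi\circ\phi$ and $\phi\circ\psi$ are the identity.

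First I would check that $\phi(E')$ satisfies V3 and V4. For V4, a black vertex $v$ is an endpoint of some $e=vw\in E'$; its partner $w$ is black, and $v$ can have no other black neighbor, since a second black neighbor $u$ would be an endpoint of another matching edge $e'$, whence the $G$-edge $vu$ would be adjacent to both $e$ and $e'$, contradicting that $E'$ is an induced matching. Hence every black vertex is paired. For V3, if two white vertices $v,w$ were adjacent, the edge $vw$ would be dominated by no edge of $E'$, since neither endpoint lies on a matching edge, contradicting that $E'$ is dominating.

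Next I would check that $\psi(C)$ is a DIM. Since V4 forces each black vertex to have exactly one black neighbor, the subgraph induced by the black vertices is $1$-regular, so $\psi(C)$ is an induced matching. For domination, take any edge $vw$; by V3 the two endpoints are not both white, so at least one, say $v$, is black and hence paired with a black neighbor $u$, and the edge $vu\in\psi(C)$ dominates $vw$ because they share $v$ (and if both endpoints are black then $vw\in\psi(C)$ itself). Thus every edge is dominated, and being an induced matching it is dominated exactly once.

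Finally I would verify $\psi\circ\phi=\mathrm{id}$ and $\phi\circ\psi=\mathrm{id}$. Both directions reduce to the observation already used above: in a DIM the only black neighbor of a matched vertex is its own partner, so the edges having two black endpoints are exactly the edges of $E'$; and in a valid coloring a vertex is black iff it is paired iff it lies on an edge of $\psi(C)$, so the endpoints of $\psi(C)$ are exactly the black vertices. The one place demanding care---and the only real content of the argument---is matching up ``induced matching'' with V4 (each black vertex is paired) and ``dominating'' with V3 (white vertices form an independent set); once these two equivalences are pinned down, the bijection and its inverse follow mechanically.
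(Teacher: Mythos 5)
Your proof is correct and is exactly the argument the paper has in mind: the paper dismisses this lemma with ``It follows from the definitions,'' and your bijection (black vertices $=$ endpoints of matching edges, matching edges $=$ black--black edges) together with the V3/V4 verifications is precisely the routine detail being omitted. No discrepancy to report.
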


{\it Proof:} It follows from the definitions. $_\triangle$

For a coloring $C$ of the vertices of $G$, denote by
$C^{-1}(white)$ and $C^{-1}(black)$, the subsets of vertices
colored white and black. A coloring $C'$ is an {\it extension} of
a $C$ if  $C^{-1}(black)\subseteq C'^{-1}(black)$ and
$C^{-1}(white) \subseteq C'^{-1}(white)$. For $V', V'' \subset
V(G)$ if $C'$ is obtained from $C$ by adding to it the vertices of
$V'$ with the color black and those of $V''$ with the color white
then write $C = C' \cup BLACK(V') \cup WHITE(V'')$.
Note that a {\it total valid} coloring can be only an extension of 
{\it partial valid} colorings and itself.

Given a partial coloring $C$, the basic idea of the algorithm is
to iteratively find extensions $C'$ of $C$, until eventually a
total valid coloring is reached. It follows from the validation
rules that if $C$ is invalid, so is $C'$. Therefore, the algorithm
keeps checking for validation, and would discard an extension
whenever it becomes invalid.

Basically, there are two different ways of possibly extending a
coloring. First, there are partial colorings $C$ which force the
colors of some of the so far uncolored vertices, leading to an
extension $C'$ of $C$. In this case, say that $C'$ has been
obtained from $C$ by {\it propagation}. The following is a
convenient set of rules, whose application may extend $C$, in the
above described way.

\begin{lemma}:\label{lemma-propagation} RULES FOR PROPAGATING COLORS: \\

The following are forced colorings for the  extensions of a partial coloring of $G$.

\begin{description}
    \item [P1.\label{itm:ruleDiamond}] In an induced diamond, degree-3 vertices must be black and the remaining ones must be white

 		\item [P2.\label{itm:rulePendant}] The neighbor of a pendant vertex must be black
 
 		\item [P3.\label{itm:ruleWhiteN}] Each neighbor of a white vertex must be black
 
 		\item [P4.\label{itm:rulePaired}] Except for its pair, the neighbors of a paired (black) vertex
  				must be white
  				
  	\item [P5.\label{itm:ruleTwoB}] Each vertex with two black neighbors must be white
  
  	\item [P6.\label{itm:ruleOneN}] If a single black vertex has exactly one uncolored neighbor then this neighbor must be black
  
  	\item [P7.\label{itm:rulePaw}] In an induced paw, the two odd-degree vertices must have different colors
  	
  	\item [P8.\label{itm:ruleC4}] In an induced $C_4$, adjacent vertices must have different 
  				colors
  
  	\item [P9.\label{itm:ruleContained}] If the neighborhood of any  uncolored neighbor of a
    single (black) vertex $s$ is contained in the neighborhood of $s$
    then the uncolored neighbor $v$ of $s$
    minimizing weight(sv) must be black. If there are several options for vertex $v$, 
    choose any one of them. We  require rules P1 and P8 
    to be applied before P9.

\end{description}

\end{lemma}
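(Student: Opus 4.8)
The plan is to read each of P1--P9 as a \emph{necessary} condition on the total valid extensions of the given partial coloring: assuming a partial coloring $C$ admits at least one total valid extension $C'$, I will show that $C'$ must assign the colors the rule prescribes (for P9, that we may always select an optimal $C'$ obeying it). Throughout I would use only the two invariants delivered by the validation rules of Definition~\ref{lemma-validation}: in any valid coloring the white vertices form an independent set (V1/V3), so every clique, and in particular every triangle, contains at most one white vertex; and in a \emph{total} valid coloring every black vertex is paired (V4), i.e.\ has exactly one black neighbor, so no triangle can be entirely black and the pair of a black vertex is its unique black neighbor.

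Rules P2--P6 then fall out in a line each. For P3, a neighbor of a white vertex cannot be white by independence, hence is black. For P5, a vertex with two black neighbors cannot itself be black without violating pairedness, so it is white; P4 is the same uniqueness statement read from the side of the paired vertex. For P2, a pendant vertex is either white, whose only neighbor is then black by P3, or black, whose unique black neighbor must be that same sole neighbor; either way the neighbor is black. For P6, the single black vertex $s$ must become paired in the total extension and so must gain a black neighbor; as its only uncolored neighbor is the last remaining candidate, that neighbor is forced black.

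Rules P1, P7, P8 are short case analyses combining the two invariants. For the diamond P1, each of its two triangles forces at least two black vertices; were a degree-$3$ vertex white, the other degree-$3$ vertex would collect two black neighbors, so both degree-$3$ vertices are black, they are each other's pair, and the two degree-$2$ vertices are consequently white. For the paw P7 and the $C_4$ P8, I would assume two adjacent vertices share a color and derive a contradiction: two adjacent whites break independence, whereas two adjacent blacks must be each other's pair, forcing the remaining triangle or cycle vertices to be white and thereby creating an adjacent white pair. This yields exactly the alternation asserted in P7 and P8.

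The main obstacle is P9, the only weight-sensitive rule. The load-bearing observation is that if an uncolored neighbor $v$ of the single black vertex $s$ has every neighbor equal to $s$ or to a neighbor of $s$ (that is $N(v)\subseteq N(s)\cup\{s\}$) and is colored black in a total valid extension, then $v$ must be paired with $s$: any other black neighbor of $v$ would lie in $N(s)$, hence be adjacent to the black vertex $s$, and so would itself carry two black neighbors, which is impossible. Thus coloring $v$ black is tantamount to pairing $s$ with $v$, and since $s$ must be paired in every total extension, exactly one uncolored neighbor of $s$ is black there. The plan is an exchange argument showing we may take that neighbor to be the contained neighbor $v$ minimizing $weight(sv)$: starting from an optimal extension that pairs $s$ with some other uncolored neighbor $p$, one re-pairs $s$ with $v$ and recolors $p$, and because all dominating induced matchings of $G$ have the same number of edges, swapping the single matching edge $sp$ for the no-heavier edge $sv$ cannot increase the total weight. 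The delicate step I expect to be the real difficulty is verifying that the re-paired coloring remains valid, i.e.\ that turning the former pair $p$ white produces neither two adjacent white vertices nor a black vertex with two black neighbors; this is precisely where the stipulation that P1 and P8 have already been applied enters, since eliminating the induced diamonds and induced $C_4$'s that could interconnect $p$, $v$ and the other neighbors of $s$ leaves the local structure around $s$ simple enough for the swap to preserve validity, completing P9.
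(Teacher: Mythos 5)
Your arguments for P1--P8 are correct, and they take a more self-contained route than the paper, which simply cites \cite{Br-Hu-Ne} for P1, P7, P8 and \cite{Do-Lo} for P3--P6, arguing only P2 and P9 directly; your derivations of all eight rules from the two validation invariants are exactly the intended ones. For P9 you also diverge from the paper in a sensible way: the paper's proof is a two-sentence sketch (``the choice of the vertex to become the pair of $s$ is independent of the choices for the remaining single vertices''), whereas you set up an explicit exchange argument, and your key observation --- that a contained uncolored neighbor of $s$ which is black in a total valid extension must be the pair of $s$, so exactly one vertex of $N_U(s)$ is black in any such extension --- is correct and is the right skeleton for making the paper's sketch rigorous.

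The gap is exactly where you flagged it, and it is genuine: you never verify that the swapped coloring ($v$ black, $p$ white) is valid, and the appeal to prior application of P1 and P8 cannot close it. After the swap the white vertices inside $N_U(s)$ are $N_U(s)\setminus\{v\}$, so validity needs this set to be independent; since $N_U(s)\setminus\{p\}$ was independent in the original extension, the only possible violation is an edge $py$ with $y\in N_U(s)\setminus\{v,p\}$. Because $s$ is adjacent to both $p$ and $y$, the vertices $s,p,y$ form a triangle, so $\{s,p,y,v\}$ can never induce a $C_4$: rule P8 is vacuous here. If $vp\in E(G)$ these four vertices induce a diamond and P1-stability gives the contradiction ($p$ would already be black, $v,y$ white); but if $vp\notin E(G)$ they induce a \emph{paw}, and only P7-stability forces $v$ white and rules the configuration out. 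Indeed the paw itself, with the pendant edge $sv$ cheapest, is a counterexample to your plan as stated: with $s$ black and $p,y,v$ uncolored it is P1- and P8-stable, P9's hypothesis holds, yet coloring $v$ black destroys both of its DIMs ($\{sp\}$ and $\{sy\}$). So the exchange must invoke P7 (together with P3/P5-stability, which guarantee every vertex of $N_U(s)$ has all its neighbors inside $\{s\}\cup N_U(s)$, so that the swap is invisible to the rest of the graph). To be fair, the paper's one-sentence proof of P9 glosses over the same point and its stipulation ``P1 and P8 before P9'' is likewise misdirected --- but your proof, as written, leans on that stipulation and would fail on the paw.
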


\begin{proof}
The rules P1, P7, P8 follows from \cite{Br-Hu-Ne}. 
while rules P3, P4, P5, P6 follows from \cite{Do-Lo}.

The rule P2 
follows from the coloring definition since each black vertex must be paired in order for
the coloring to be {\it valid}.
Finally, for P9, 
let $s$ be a single vertex. Suppose the neighborhood of any uncolored neighbor
of $s$ lies within the neighborhood of $s$. Then the choice of the
vertex to become the pair of $s$ is independent of the choices for
the remaining single vertices of the graph. Therefore, to obtain a
minimum weighted dominating induced matching of $G$, the neighbor
$v$ of $s$ minimizing $weight(sv)$ must be black. $_\triangle$
\end{proof}

\begin{lemma}\label{lem-K4}
\cite{Br-Hu-Ne} If $G$ contains a $K_4$ then $G$ has no DIM.
\end{lemma}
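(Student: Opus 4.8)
The plan is to prove the contrapositive: I would assume $G$ contains a $K_4$ and show that no valid total coloring of $G$ can exist, which by Lemma~\ref{cor-totalvalid} means $G$ has no DIM. Let the four mutually adjacent vertices be $a,b,c,d$. The argument rests entirely on the two validation rules V3 and V4 for total colorings: no two white vertices may be adjacent, and every black vertex must be paired, i.e.\ have exactly one black neighbor.

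\begin{proof}
Suppose, for contradiction, that $G$ contains a $K_4$ on vertices $a,b,c,d$ and that $G$ admits a DIM. By Lemma~\ref{cor-totalvalid} there is a total valid coloring $C$ of $G$. Since the four vertices are pairwise adjacent, rule V3 forbids two of them from being white; hence at most one of $a,b,c,d$ is white, so at least three are black. Among any three black vertices of the $K_4$, each pair is adjacent, so every one of these black vertices has at least two black neighbors within $\{a,b,c,d\}$. This contradicts rule V4, which requires every black vertex to be \emph{paired}, i.e.\ to have exactly one black neighbor. Therefore no total valid coloring exists, and $G$ has no DIM. $_\triangle$
\end{proof}

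The key steps, in order, are: first invoke the one-to-one correspondence of Lemma~\ref{cor-totalvalid} to translate ``has a DIM'' into ``has a total valid coloring''; second apply V3 to bound the number of white vertices in the clique by one; third apply V4 to derive the contradiction from the resulting abundance of mutually adjacent black vertices. I do not expect any genuine obstacle here, as the statement is an elementary consequence of the definitions rather than a deep combinatorial fact. The only point requiring a moment's care is confirming that ``at least three black, pairwise adjacent'' already forces a pairing violation regardless of the colors of vertices outside the clique: a black vertex in the clique has $\geq 2$ black neighbors inside it, and additional black neighbors outside can only make the pairing condition harder to satisfy, so the contradiction is robust.
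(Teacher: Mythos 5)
Your proof is correct. Note, however, that the paper itself supplies no argument for this lemma: it is stated with a citation to Brandst\"adt, Hundt and Nevries, so there is no in-paper proof to compare against. Your blind reconstruction is exactly the standard elementary argument one would expect behind that citation: translate ``DIM'' into ``total valid coloring'' via Lemma~\ref{cor-totalvalid}, use V3 (white vertices form an independent set) to force at least three of the four mutually adjacent vertices to be black, and then observe that each such black vertex has at least two black neighbors inside the clique, violating V4 (paired means \emph{exactly} one black neighbor). The closing remark that extra black neighbors outside the clique only hurt is not even needed, since two black neighbors inside the clique already contradict V4, but it does no harm. In short: the proposal is complete, self-contained, and supplies the justification the paper delegates to the literature.
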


Say that a coloring $C$ is {\it empty} if all vertices are uncolored.

Let $C$ be a valid coloring and $C'$ an extension of it, obtained
by the application of the propagation rules.  If $C = C'$ then $C$
is called {\it stable}. On the other hand, if $C \neq C'$ then
$C'$ is not necessarily valid. Therefore, after applying
iteratively the propagation rules, we reach an extension which is
either invalid or stable.

In order to possibly extend a stable coloring $C$, we apply {\it
bifurcation rules}. Any coloring directly obtained by these rules
is not forced. Instead, in each of the these rules, there are two
possibly conflicting alternatives leading to distinct extensions
$C'_1, C'_2$ of $C$. Each of $C'_1$ or $C'_2$ may be independently
valid or invalid. The next lemma describes the bifurcation rules.
We remark that there exist simpler bifurcation rules. However,
using the  rules below we obtain a sufficient  number of vertices
that get forced colorings, through the propagation which follow
the application of any bifurcation rule, so as to guarantee a
decrease of the overall complexity of the algorithm. The
complexity obtained relies heavily on this fact.

In general, we adopt the following notation. If $C$ is  a stable
coloring then S denotes the set of single vertices of it
, $U$ is the set of uncolored vertices and $T = U \setminus \cup_{s \in S} N_U(s)$.

\begin{lemma}\label{lemma-bifurcation}: BIFURCATION RULES \\
Let $C$ be a partial (valid) stable coloring of a graph $G$. At least one of the
following alternatives can be applied to define extensions $C'_1,
C'_2$ of $C$.

\begin{description}

\item [B1.\label{itm:ruleEmpty}] If $C$ is an {\it empty} coloring: choose an arbitrary vertex $v$ then  $C'_1 := C \cup
BLACK(\{v\})$ and $C'_2 := C \cup WHITE(\{v\})$\\

\item [B2.\label{itm:ruleExEdge}] If $\exists$ edge $vw$ s.t. $v \in N_U(s)$ and $w \in
N_U(s')$, for some $s,s' \in S, s \ne s'$ then  $C'_1 := C \cup
BLACK(\{v\})$ and $C'_2 := C \cup WHITE(\{v\})$\\

\item [B3.\label{itm:ruleSomes}] For some $s \in S$, if $\exists v \in N_U(s)$  s.t. $\exists w \in N_T(v)$:

	\begin{description}
			\item [(a)\label{itm:Bif3a}] If $|N_U(s)| \neq 3 \vee d(w) \neq 3 \vee |N_T(v)| \geq 2$ then $C'_1 := C \cup BLACK(\{v\})$ and $C'_2 := C \cup WHITE(\{v\})$.	\\
			
			\item [(b)\label{itm:Bif3b}] If $|N_U(s)| = 3 \wedge d(w) = 3 \wedge N_T(v) = \{w\}$, let $N_U(s) = \{v,v',v''\}$. \\
				\begin{description}
				
					\item [i.] If $N_U(v') = N_U(v'') = \emptyset$ then $C'_1 := C \cup BLACK(\{v\})$ and
								$C'_2 := C \cup WHITE(\{v\})$\\
					
					\item [ii.] If $N_U(v') \neq \emptyset$, let $w' \in N_T(v')$, with $w' \neq w$. If $|N(w)\cup N(w')| > 5$ or $ww' \notin E(G)$ then $C'_1 := C \cup
								BLACK(\{v\})$ and $C'_2 := C \cup WHITE(\{v\})$ \\
					
					\item [iii.\label{itm:B3b.iii}] If $N_U(v') \neq \emptyset$, let  $w' \in N_T(v')$, with $w' \neq w$. 
								If  $ww'  \in E(G)$ and $z \in N(w) \cap N(w')$ then $C'_1 := C \cup BLACK(\{v''\})$, while if 
								$weight(sv) + weight(w'z) \leq weight(sv') + weight(wz)$ then $C'_2 := C \cup BLACK(\{v\})$, 
								otherwise $C'_2 := C \cup BLACK(\{v'\})$ \\
				\end{description}
	\end{description}
	
\end{description}
\end{lemma}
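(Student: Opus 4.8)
The plan is to prove that the trigger conditions of B1, B2 and B3 are jointly exhaustive over all partial valid stable colorings, the heart of the matter being to show that whenever B1 and B2 cannot be applied, the hypothesis of the propagation rule P9 would already be satisfied, which is impossible for a stable coloring. I would argue by contradiction: assume that none of B1, B2, B3 can be applied to $C$ and derive a violation of stability. Note first that the statement only asserts applicability, so the work is to verify that the case split is complete and, inside B3, that one of its sub-cases always fires.

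First I would dispose of the empty case: if $C$ is empty then B1 applies, so I may assume $C$ is non-empty. A preliminary step, needed because both B2 and B3 refer to single vertices, is to show that a non-empty valid stable coloring has $S \neq \emptyset$. Suppose instead $S = \emptyset$. If $C$ contains a black vertex, then by validity it is paired; by the saturation of P4 all neighbors of a paired black vertex other than its pair are already white, and by the saturation of P3 all neighbors of a white vertex are already black, and any black vertex so reached is again paired since $S=\emptyset$. Because $G$ is connected, these forced colors already cover all of $V$, contradicting $U \neq \emptyset$. If instead $C$ contains only white vertices, then P3 already forces a black neighbor (as $G$ is connected), again a contradiction. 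Hence $S \neq \emptyset$, and each $s \in S$ has $N_U(s) \neq \emptyset$ by V2.

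The main step additionally assumes that B2 does not apply, and shows the trigger of B3 must hold. Fix any $s \in S$ and any $v \in N_U(s)$. Since $C$ is stable, $v$ cannot have two black neighbors (P5) nor a white neighbor (P3); as $s$ is black, $s$ is the unique black neighbor of $v$ and every remaining neighbor of $v$ is uncolored. If B3 did not trigger at $(s,v)$ then $N_T(v) = \emptyset$, so each uncolored neighbor $u$ of $v$ lies in $\bigcup_{s' \in S} N_U(s')$; and the edge $vu$ together with $v \in N_U(s)$ would activate B2 unless $u \in N_U(s)$, so every such $u$ is a neighbor of $s$. Therefore $N(v) \subseteq \{s\} \cup N(s) = N[s]$, i.e. $N[v] \subseteq N[s]$. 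Since this holds for every $v \in N_U(s)$, the hypothesis of P9 is met at $s$, so P9 would force a color and $C$ would not be stable, a contradiction. This shows B3 must trigger, completing the proof that at least one of B1, B2, B3 is applicable.

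It remains to check the internal completeness of B3, which I expect to be the most tedious part rather than the conceptual obstacle. Given that the trigger holds for some $(s,v,w)$, the guards of B3(a) and B3(b) are logically complementary, since B3(a) covers everything except $|N_U(s)| = 3 \wedge d(w) = 3 \wedge N_T(v) = \{w\}$, which is exactly the premise of B3(b). Inside B3(b), writing $N_U(s) = \{v,v',v''\}$, the sub-cases split on whether both $v'$ and $v''$ have no uncolored neighbor (case i) or not (cases ii and iii, after relabelling so that $N_U(v') \neq \emptyset$), and then on whether $ww' \notin E(G)$ with the cardinality test (ii) or $ww' \in E(G)$ with a common neighbour $z$ (iii); verifying that these guards are exhaustive and that the auxiliary vertices $w'$ and $z$ actually exist under the constraints $|N_U(s)| = 3$ and $d(w) = 3$ is the routine bookkeeping I would carry out last. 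The conceptual crux, and the step I expect to be hardest to state cleanly, is the neighbourhood-containment argument feeding into P9, because it must read P9 with closed neighbourhoods and simultaneously exploit stability, the failure of B2, and the failure of the B3 trigger for \emph{every} choice of $v \in N_U(s)$.
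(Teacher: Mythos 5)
Your core argument is correct and coincides with the paper's: the paper proves the same exhaustiveness directly (stability means P9 cannot fire, so some $v \in N_U(s)$ has an uncolored neighbor $w \notin N[s]$; such a $w$ lies either in some $N_U(s')$, triggering B2, or in $T$, triggering B3), whereas you run the identical argument in contrapositive form; your preliminary claim $S \neq \emptyset$ is also established the same way in the paper (connectivity plus stability under P3/P4, applied to an edge between a colored and an uncolored vertex). So at the level of the trigger conditions of B1/B2/B3, your proposal and the paper agree.

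The gap is in what you defer as ``routine bookkeeping''. Rules B3(b)(ii) and (iii) are not mere Boolean guards over already-existing objects: both presuppose a vertex $w' \in N_T(v')$ with $w' \neq w$ (note $N_T$, not merely $N_U(v') \neq \emptyset$), rule (iii) presupposes a common neighbor $z$ of $w$ and $w'$, and the exhaustiveness of (ii) versus (iii) needs $d(w') = 3$. The paper's proof spends its second half on exactly these points: $w \neq w'$ follows from stability under P8 (if $w = w'$ then $s\,v\,w\,v'$ would be an induced $C_4$, so $w$ could not remain uncolored); $d(w') = 3$ and $N_T(v') = \{w'\}$ hold because otherwise B3(a) would already have been applied to the triple $(s, v', w')$; and when $ww' \in E(G)$ and $|N(w) \cup N(w')| \leq 5$, inclusion–exclusion with $d(w) = d(w') = 3$ yields the required $z$. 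None of these steps appears in your proposal, and they are genuine graph-structural arguments rather than bookkeeping, so as written the proof of the lemma is incomplete precisely where the subcases of B3(b) must be shown applicable. (One further point is glossed over by both you and the paper: passing from $N_U(v') \neq \emptyset$ to the existence of $w' \in N_T(v')$ requires knowing that $N_U(s)$ is an independent set --- otherwise $v'$'s uncolored neighbors could all lie in $\{v, v''\}$ --- which in turn rests on stability under P1/P7 together with $K_4$-freeness.)
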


Each rule is applied after the previous rule, that is, if the
condition of the previous case is not verified in the entire
graph. Note that this applies to subitems of case B3.

{\it Proof} If $C$ is an {\it empty} coloring then the rule B1 is applied.\\
		 If $C$ is not an {\it empty} coloring and $C$ is not a {\it total} coloring then $S \ne \emptyset$. 
		 Since $C$ is not {\it total} and the graph is connected then there is at least one edge $sv$
		 where $v$ is uncolored. If $s$ is white then $v$ must be black P3
		  else if $s$ is a {\it paired} vertex then
		 $v$ must be white P4
		 . Therefore $s$ must be a single black vertex, hence $S \ne \emptyset$.
		 Let $s \in S$. Since $C$ is valid then $N_U(s) \ne \emptyset$ by 
		 V2 and since is {\it stable} $|N_U(s)| \ne 1$ by P6
		 Therefore $|N_U(s)| \geq 2$.
		 Moreover rule P9
		 can not be applied, therefore $\exists v \in N_U(s)$ s.t. 
		 $|N_U(v) \setminus N(s)| > 0$, let $w \in N_U(v) \setminus N(s)$. 
		 If $\exists s' \in S, s \ne s'$ s.t. $w \in N_U(s')$ then rule B2 
		 is applied.\\
		 Suppose that rule P2
		 can not be applied. 
		 Then  $w \in N_T(v) (|N_T(v)| \ge 1)$. Clearly, $d(w) \ne 1$, otherwise, rule P2
		 must be applied and $v$ must get color black.
		 
		 In case $|N_U(s)| \ne 3$ or $d(w) \ne 3$ or $|N_T(v)| \geq 2$ we apply rule B3(a).
		 Otherwise: $|N_U(s)|=3,d(w)=3,|N_T(v)|=1$. Note that in B3(b) whenever we refer to $v'w'$ it behaves symmetric to $vw$ since otherwise $v'w'$
		 were found in step B3(a) replacing $vw$.\\
		 In the first subcase of B3(a) the case analyzed is whenever $N_U(v') = N_U(v'') = \emptyset$, while in the second and third 
		 the algorithm handle the cases when at least one of them has uncolored neighbors.\\
		 Suppose w.l.o.g. $N_U(v') \ne \emptyset$ where $w' \in N_T(v')$. It is easy to see that $w \ne w'$ since otherwise $svwv'$ is a $C_4$ and therefore $w$
		 can't be uncolored by rule P8.
		 Now there are three cases which lead to two possible outcomes from the algorithm: In case $ww' \in E(G)$ or $|N_U(v) \cup N_U(w)|>5$ then the result of the 					 algorithm is given by the second subcase (ii), else it is given by the third subcase (iii). $_\triangle$
		 

\section{The Algorithm}\label{sec-algorithm}

The lemmas described in the last section lead to an exact
algorithm for finding a minimum weight DIM of a graph $G$, 
if any, which we describe below.

In the initial step of the algorithm, we find the set containing the $K_4$'s of $G$.
If  $K4 \ne \emptyset$, by lemma \ref{lem-K4}, $G$ does not
have DIM's, and terminate the algorithm. Otherwise, define the set
$COLORINGS$ to contain through the process the candidates
colorings to be examined and eventually extended. This set should be implemented using a $LIFO$ 
(Last In First Out) data structure which achieves linear space complexity of the algorithm because 
the number of colorings in $COLORINGS$ is at most $n+1$, and each coloring needs $O(1)$ space.
We give more detailed explanation in the next section.
Next, include in COLORINGS an {\it empty} coloring.
In the general step, we choose any coloring $C$ from $COLORINGS$
and remove it from this set. Then iteratively  propagate the
coloring by Lemma \ref{lemma-propagation} into an extension $C'$
of it, and validate the extension by Lemma \ref{lemma-validation}.
The iterations are repeated until one of the following situations
is reached: $C'$ is invalid, $C'$ is a total valid coloring, or a
partial stable (valid) coloring. In the first alternative, $C'$
is discarded and a new coloring from $COLORINGS$ is chosen. If
$C'$ is a a total valid coloring, then sum the amount of valid DIMs 
related to this coloring, find its weight and if
smaller than the least weight so far obtained, it becomes the
current candidate for the minimum weight of a DIM if $G$. Finally,
when $C'$ is stable  we extended it by bifurcation rules: choose
the first rule of Lemma \ref{lemma-bifurcation} satisfying $C'$,
compute the extensions $C'$ and $C''$, insert them in $COLORINGS$,
select a new coloring from $COLORINGS$ and repeat the process.

The formulation below describes the details of the method. The
propagation and validation of a coloring $C$ is done by the
procedure $PROPAGATE-VALIDATE(C, RESULT)$. 
At the end, the returned coloring corresponds to the extension $C'$ of $C$, 
after iteratively applying propagation. The variable RESULT indicates
the outcome of the validation analysis. If $C'$ is invalid then
$RESULT$ is returned as `invalid'; if $C'$ is a valid total
coloring then it contains `total', and otherwise $RESULT$ equals
`partial'. Finally, $BIFURCATE(C, C'_1,C'_2)$ computes the
extensions $C'_1$ and $C'_2$ of $C$.  \\

\newpage

{\bf Algorithm Minimum Weighted DIM / Counting DIM}
\vspace{1mm}

\fbox{%
\begin{minipage}{5 in}

\begin{description}
	\item [1.] Find the subset $K4$\\
						 {\bf if} $K4 \neq \emptyset$ {\bf then} terminate the algorithm: $G$ has no DIM \\
						 \hspace*{.4cm} $SOLUTION := NO DIM$ 
	
	\item [2.] $COLORINGS := \{C\}$, where $C$ is an {\it empty} coloring

	\item [3.] {\bf while} $COLORINGS \neq \emptyset$ {\bf do} 
	
				\begin{description}
						\item [a.] \hspace*{.3cm} choose $C \in COLORINGS$ and remove it from $COLORINGS$
						\item [b.] \hspace*{.3cm} $PROPAGATE-VALIDATE(C, RESULT)$
						\item [c.] \hspace*{.3cm} {\bf if} $RESULT =$ `total' {\bf and} $weight(C) < SOLUTION$ {\bf then}
															\hspace*{.9cm} $SOLUTION := weight(C)$\\
												\hspace*{.3cm} {\bf else if} $RESULT =$ `partial' {\bf then} \\
														\hspace*{.6cm} Set $C'_1$ and $C'_2$ according to BIFURCATION RULES on $C$ B1\\
														\hspace*{.6cm} $COLORINGS := COLORINGS \cup \{C'_1, C'_2\}$\\
												\hspace*{.3cm} {\bf end if} 
				\end{description}
								
	\item [4.] {\bf Output} $SOLUTION$
\end{description}
\end{minipage}}

\vspace{1cm}

{\bf procedure} $PROPAGATE-VALIDATE (C, RESULT)$
\vspace{1mm}

\fbox{%
\begin{minipage}{5 in}

\begin{description}

	\item [Comment] Phase 1: Propagation

	\item [1.] $C':= C$
	
	\item [2.] {\bf repeat} \\
						\ \ \ \ $C := C'$ \\
						\ \ \ \ $C' :=$ extension of $C$ obtained by the PROPAGATION RULES \\ 
						{\bf until} $C = C'$ 
						
	\item [Comment] Phase 2: Validation
	
	\item [3.] Using the VALIDATION RULES (lemma \ref{lemma-validation}) do as follows: \\
						{\bf if} $C$ is an invalid coloring {\bf then return} $(C, `invalid')$ \\
						{\bf else} if $C$ is a partial coloring {\bf then return} $(C, `partial')$ \\
						{\bf else return} $(C,$ `total'$)$ \\

\end{description}

\end{minipage}}

\section{Correctness and Complexity}\label{sec-correctness-complexity}

It is easy to see that our algorithm uses the {\it branch $\&$ reduce} paradigm since {\it propagation rules} can be mapped to {\it reduction rules} since are used
to simplify the problem instance or halt the algorithm and the {\it bifurcation rules} can be mapped to {\it branching rules} since are used to solve the problem
instance by recursively solving smaller instances of the problem.

\begin{theorem}\label{thm-correctness}
The algorithm described in the previous section correctly computes
the minimum weight of a dominating induced matching of a graph
$G$.
\end{theorem}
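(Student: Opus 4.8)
The plan is to reduce the statement to a single quantity attached to each coloring and then to prove, by induction on the number of uncolored vertices, that the algorithm computes that quantity at the root. For a partial coloring $C$, let $\mathrm{opt}(C)$ denote the minimum weight of a total valid extension of $C$, with $\mathrm{opt}(C)=+\infty$ if no such extension exists. By Lemma~\ref{cor-totalvalid} the DIMs of $G$ are exactly its total valid colorings, so the minimum weight of a DIM equals $\mathrm{opt}(C_0)$, where $C_0$ is the empty coloring. First I would dispatch the $K_4$ case by Lemma~\ref{lem-K4}: if $K4\neq\emptyset$ no DIM exists and the algorithm halts correctly, and otherwise $G$ is $K_4$-free, which is the setting in which the propagation and bifurcation lemmas are stated. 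The central claim is then: \emph{whenever a coloring $C$ is removed from $COLORINGS$ and processed, the subtree of the search rooted at $C$ contributes exactly $\mathrm{opt}(C)$ to $SOLUTION$} (initialized to $+\infty$). Applied to $C_0$ this gives the theorem.

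For the base cases I would use the validation rules (Definition~\ref{lemma-validation}) together with the monotonicity observed after them, namely that if $C$ is invalid then so is every extension. Hence when $PROPAGATE\text{-}VALIDATE$ returns \texttt{invalid} we have $\mathrm{opt}(C)=+\infty$ and discarding $C$ loses nothing, while if it returns \texttt{total} then $C$ is total valid and $\mathrm{opt}(C)=weight(C)$ is recorded at line~3c. For the inductive step I must first check that the propagation performed in Phase~1 does not change $\mathrm{opt}$. If $C'$ is the stable extension obtained from $C$, then for rules P1--P8 the forced colors are genuinely mandatory, so the total valid extensions of $C$ and of $C'$ coincide and $\mathrm{opt}(C)=\mathrm{opt}(C')$; for the optimization-based rule P9 the sets differ, but the argument in the proof of Lemma~\ref{lemma-propagation} shows that some minimum weight total valid extension of $C$ colors the selected neighbor $v$ black, so again $\mathrm{opt}(C)=\mathrm{opt}(C')$.

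Next I would show that bifurcation splits $\mathrm{opt}$ correctly. Lemma~\ref{lemma-bifurcation} guarantees that a stable partial valid $C'$ admits some applicable rule producing $C'_1,C'_2$, each with strictly fewer uncolored vertices. For B1, B2, B3(a), B3(b)i and B3(b)ii the two branches exhaust the two possible colors of a single vertex $v$, so the total valid extensions of $C'$ split into those of $C'_1$ and those of $C'_2$, giving $\mathrm{opt}(C')=\min\{\mathrm{opt}(C'_1),\mathrm{opt}(C'_2)\}$; by the induction hypothesis the two subtrees return these two values and line~3c takes their minimum. The hard part will be the optimization branch B3(b)iii, where the three candidate colorings of $v,v',v''$ are \emph{not} fully enumerated: one branch fixes $v''$ black and the other fixes whichever of $v,v'$ has the smaller incremental cost, according to $weight(sv)+weight(w'z)\le weight(sv')+weight(wz)$. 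Here I would give an exchange argument: starting from an arbitrary minimum weight total valid extension of $C'$ and using the rigid local structure forced by $s$ being single with $N_U(s)=\{v,v',v''\}$, $d(w)=3$, $ww'\in E(G)$ and $z\in N(w)\cap N(w')$, I would rewrite it into a total valid extension of equal or smaller weight that is consistent with $C'_1$ or $C'_2$, thereby establishing $\mathrm{opt}(C')=\min\{\mathrm{opt}(C'_1),\mathrm{opt}(C'_2)\}$ despite the absence of a complete case split.

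Finally I would record termination: every bifurcation colors at least one previously uncolored vertex and propagation never uncolors any, so the number of uncolored vertices strictly decreases along each root-to-leaf path and is bounded by $n$; together with the fact that $COLORINGS$ is finite at all times, this makes the search tree finite and guarantees the \textbf{while} loop halts. Combining the three invariants by the induction described above, the value finally returned is $\mathrm{opt}(C_0)$, which by the reduction of the first paragraph is the minimum weight of a dominating induced matching of $G$. I expect the two optimization-based rules, P9 in propagation and B3(b)iii in bifurcation, to carry essentially all the difficulty, since for them the naive ``both branches preserve all solutions'' reasoning fails and a weight-preserving exchange argument is required instead. $_\triangle$
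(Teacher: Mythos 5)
Your proposal is correct and follows essentially the same approach as the paper: the paper's (very terse) proof asserts exactly your three invariants --- every coloring representing a DIM is explored, propagation/bifurcation simplifications are valid and invalid colorings are safely discarded, and valid colorings are skipped only when another extension of no greater weight is explored (your exchange arguments for P9 and B3(b)iii). Your formulation via $\mathrm{opt}(C)$ and induction on uncolored vertices is simply a rigorous fleshing-out of what the paper states in three sentences, correctly isolating the two optimization-based rules as the only places where the naive ``all solutions are preserved'' reasoning must be replaced by a weight-domination argument.
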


{\it Proof}: 
The correctness of the algorithm follows from the fact that the algorithm considers all the cases that need to be considered, 
this is, any coloring that represents a DIM must be explored. 
Lemmas \ref{lemma-propagation} and \ref{lemma-bifurcation} ensures that the simplifications of the instances are valid, {\it invalid} colorings are 
discarded, some valid colorings can not be explored only if other valid coloring representing a better DIM (with less weight) is explored.

For proving the worst-case running time upperbound for the algorithm we will use the following useful definition and theorem.

\begin{definition}\label{def:Branch}
\cite{Fo-Kr}
Let $b$ a branching rule and $n$ the size of the instance. Suppose rule $b$ branches the current instance into $r \geq 2$ instances of size at most $n-t_1,n-t_2, \ldots, n-t_r$, for all instances of size $n \geq max \{t_i:i = 1, 2, \ldots, r\}$. Then we call $b = (t_1,t_2,\ldots, t_r)$ the {\it branching vector} of branching
rule $b$.
\end{definition}

The branching vector $b = (t_1, t_2, \ldots, t_r)$ implies the linear recurrence $T(n) \leq T(n-t_1) + T(n-t_2) + \ldots, T(n-t_r)$.

\begin{theorem}\label{thm:Branch}
\cite{Fo-Kr}
Let $b$ be a branching rule with branching vector $(t_1, t_2, \ldots ,t_r)$. Then
the running time of the branching algorithm using only branching rule $b$ is $O^*(\alpha^n)$,
where $\alpha$ is the unique positive real root of
\begin{center}
	$x^n - x^{n-t_1} - x^{n-t_2} - \ldots - x^{n-t_r} = 0$
\end{center}
\end{theorem}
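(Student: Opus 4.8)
The plan is to read the branching rule as generating a search tree and to let $T(n)$ denote the number of leaves (equivalently, nodes) of that tree on an instance of size $n$; since each node of a branch \& reduce computation performs only polynomial work, a bound $T(n) = O(\alpha^n)$ immediately yields the claimed $O^*(\alpha^n)$ running time. By Definition~\ref{def:Branch} the quantity $T(n)$ obeys the homogeneous linear recurrence $T(n) \leq T(n-t_1) + \cdots + T(n-t_r)$ for all $n \geq \max_i t_i$, so the whole argument reduces to analyzing the growth rate of this recurrence.

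First I would locate the base of the exponential. Substituting the ansatz $T(n) = x^n$ into the recurrence shows that $x$ should be a root of $x^n - x^{n-t_1} - \cdots - x^{n-t_r}$; dividing out the factor $x^{\,n-\max_i t_i}$ leaves a polynomial of degree $\max_i t_i$ whose nonzero roots are exactly the nonzero roots of the original one and all satisfy $\sum_{i=1}^r x^{-t_i} = 1$. To pin down $\alpha$, I would study $g(x) = 1 - \sum_{i=1}^r x^{-t_i}$ on $(0,\infty)$: each summand $-x^{-t_i}$ is strictly increasing for $x>0$ (as $t_i \geq 1$), so $g$ is strictly increasing, with $g(x) \to -\infty$ as $x \to 0^+$ and $g(x) \to 1$ as $x \to \infty$. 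By the intermediate value theorem $g$ has exactly one zero $\alpha$, and since $g(1) = 1 - r < 0$ (because $r \geq 2$) we even get $\alpha > 1$. This establishes existence and uniqueness of the positive real root and yields the key identity $\sum_{i=1}^r \alpha^{-t_i} = 1$.

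With this identity in hand, the upper bound follows by strong induction on $n$. I would fix a constant $c$ large enough that $T(n) \leq c\alpha^n$ holds for the finitely many base cases $n < \max_i t_i$, each of which costs only polynomial work. For $n \geq \max_i t_i$ the inductive step is then immediate: $T(n) \leq \sum_{i=1}^r c\alpha^{\,n-t_i} = c\alpha^n \sum_{i=1}^r \alpha^{-t_i} = c\alpha^n$, where the final equality is exactly the characteristic identity. Hence $T(n) = O(\alpha^n)$, and folding in the per-node polynomial cost gives the stated $O^*(\alpha^n)$ bound.

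The step I expect to require the most care is justifying that $\alpha$ is the \emph{correct} base, i.e.\ that no complex root of the characteristic polynomial forces faster growth and that singling out the positive real root is legitimate. The clean way to dispatch this is to observe that any root $\beta$ satisfies $\sum_i \beta^{-t_i} = 1$, whence $1 = |\sum_i \beta^{-t_i}| \leq \sum_i |\beta|^{-t_i}$; since $x \mapsto \sum_i x^{-t_i}$ is strictly decreasing on $(0,\infty)$ and equals $1$ at $\alpha$, this inequality forces $|\beta| \leq \alpha$, so $\alpha$ is a root of maximum modulus. Any remaining roots of modulus exactly $\alpha$ contribute at worst polynomial factors, which are absorbed into $O^*$, confirming that the growth rate is governed precisely by the unique positive real root $\alpha$.
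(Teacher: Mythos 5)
Your proposal is correct. Note, however, that the paper does not prove this statement at all: Theorem~\ref{thm:Branch} is imported verbatim from the cited reference \cite{Fo-Kr} (Fomin and Kratsch's book), so there is no in-paper proof to compare against. Your argument is essentially the standard one from that source: define $T(n)$ as the number of leaves of the search tree, observe the recurrence $T(n) \leq T(n-t_1) + \cdots + T(n-t_r)$, establish existence and uniqueness of the positive real root via monotonicity of $g(x) = 1 - \sum_i x^{-t_i}$ on $(0,\infty)$, and close the induction using the characteristic identity $\sum_i \alpha^{-t_i} = 1$. All of these steps are sound, including the observation that $\alpha > 1$ because $g(1) = 1 - r < 0$.

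One small remark: your final paragraph about complex roots is superfluous rather than wrong. Because the recurrence is an \emph{inequality} and you prove the bound $T(n) \leq c\alpha^n$ directly by strong induction, you never need to analyze the general solution of the associated linear recurrence, so no root of the characteristic polynomial other than $\alpha$ can affect the conclusion. The concern about roots of maximal modulus contributing polynomial factors only arises when one solves the recurrence with equality via the full basis of solutions; your inductive argument sidesteps that entirely, which is precisely what makes this the clean textbook route.
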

 
The unique positive real root $\alpha$ is the {\it branching factor} of the branching vector $b$. 

We denote the branching factor of $(t_1, t_2, \ldots, t_r)$ by $\tau(t_1, t_2, \ldots, t_r)$.

Therefore for analyzing the running time of a branching algorithm we can compute the factor $\alpha_i$ for every branch rule $b_i$, and an upper bound of the 
running time of the branching algorithm is obtained by taking $\alpha = max_i \alpha_i$ and the result is an upper bound for the running time of $O^*(\alpha^n)$.

The upper bound comes from counting the leaves of the search tree given by the algorithm, using the fact that each leave can be executed in polynomial time.
The complexity of the algorithm without hiding the polynomial depends on the upperbound time for the execution of each branch in the search tree.

Further notes on this topic can be found in \cite{Fo-Kr}

\begin{theorem}\label{thm-complexity}
The algorithm above described requires $O^*(1.1939^n)$ time and $O(n+m)$ space for completion.
\end{theorem}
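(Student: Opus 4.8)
The plan is to apply Theorem~\ref{thm:Branch} directly, taking as the size of an instance the number of uncolored vertices $|U|$ (so $|U|\le n$ initially and the measure never increases during a run). First I would check that the two ingredients of the algorithm fit the branch \& reduce framework exactly as Definition~\ref{def:Branch} requires. The propagation of Lemma~\ref{lemma-propagation} only \emph{forces} colors on previously uncolored vertices, so every application strictly decreases $|U|$ (or halts on an invalidation), runs in polynomial time, and terminates; hence the whole \texttt{PROPAGATE-VALIDATE} phase is a reduction that incurs no branching. The bifurcation of Lemma~\ref{lemma-bifurcation} always produces exactly two extensions $C'_1,C'_2$, so the recursion tree is binary, and by the remarks following Theorem~\ref{thm:Branch} the running time is $O^*(\alpha^n)$ with $\alpha=\max_i \alpha_i$ the largest branching factor over the bifurcation rules.

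Next I would compute a branching vector for each rule. For a given bifurcation I count, in each of the two branches, the number of vertices that receive a color as a direct consequence of the branch: the bifurcated vertex itself together with all vertices whose color is then forced by the ensuing propagation P1--P9. If branch $i$ colors at least $t_i$ vertices, the reduction in $|U|$ is at least $t_i$, giving the vector $(t_1,t_2)$ and factor $\tau(t_1,t_2)$. Rule B1 is applied only at the root, since the empty coloring occurs once, so it contributes only a constant factor and can be ignored asymptotically. For B2 and the subcases of B3 I would trace the cascade of forced colors: for instance, coloring a neighbor $v$ of a single vertex $s$ black makes $v$ the pair of $s$ and forces, by P4, the remaining neighbors of both $s$ and $v$ to become white, which in turn forces their neighbors black by P3, and so on. The outcome is a table of vectors $(t_1,t_2)$, one per rule, and the theorem reduces to checking $\tau(t_1,t_2)\le 1.1939$ for every entry.

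The main obstacle is the delicate case B3(b), where $s$ has exactly three uncolored neighbors, $w$ has degree $3$, and $N_T(v)=\{w\}$. Here the generic argument forces too few vertices, which is precisely why the rule was refined into subcases i--iii and why, as the paper already remarks, the bifurcation rules were engineered to guarantee \emph{enough} forced colorings. For each subcase I would exploit the specific local structure, namely the relations among $v,v',v'',w,w',z$ and the $C_4$, diamond and paw patterns excluded by P1, P7, P8, to show that even in the sparsest admissible configuration each branch still colors enough vertices; the binding configuration is the one whose factor equals the claimed bound. Concretely, since $\tau(3,5)\approx 1.1939$, the tight case is one in which one branch forces about three vertices and the other about five, and the remaining work is to verify that no subcase of B3(b) does worse than this.

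Finally, for the space bound I would use that \texttt{COLORINGS} is a LIFO stack traversed depth-first, so at any moment it holds at most one pending sibling per level, i.e. $O(n)$ colorings, and each is stored as the $O(1)$ incremental decision of its bifurcation, the chosen vertex and its color, the full coloring being recovered by replaying propagation. Together with the adjacency representation of $G$ and the single working coloring this is $O(n+m)$ space. Since each node of the search tree performs $O(m)$ work in testing the rules and propagating, the total time is $O(m\cdot \alpha^n)\subseteq O^*(1.1939^n)$, which completes the bound.
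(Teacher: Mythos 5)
Your overall framework coincides with the paper's: both proofs invoke Definition~\ref{def:Branch} and Theorem~\ref{thm:Branch} to reduce the time bound to computing one branching vector per bifurcation rule, both note that propagation is a polynomial-time reduction that never branches, both dismiss B1 as a one-time constant factor, and your space argument (a LIFO stack holding $O(1)$ incremental data per pending coloring, with the full coloring recovered by replay) is essentially the paper's. The genuine gap is that you stop exactly where the proof begins. The entire content of Theorem~\ref{thm-complexity} is the case-by-case derivation of the vectors: the paper shows B2 yields $(3,5)$, the four subcases of B3(a) yield $(3,5)$ or $(2,7)$, and the three subcases of B3(b) yield $(4,6)$, $(2,7)$ and $(6,6)$, whence the maximum factor is $\tau(3,5)\approx 1.1939$. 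You explicitly defer this (``the theorem reduces to checking $\tau(t_1,t_2)\le 1.1939$ for every entry'', ``the remaining work is to verify\ldots''), so what you have is a correct plan, not a proof that the base is $1.1939$ rather than something larger.

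The omission is not merely clerical, because the generic cascade you describe does not suffice to reach the bound. For B2, coloring $v$ black forces $N_U(s)\setminus\{v\}$ and $w$ white (rule P4), and coloring $v$ white forces $w$ black (rule P3) and then $N_U(s')\setminus\{w\}$ white; since one only knows $|N_U(s)|,|N_U(s')|\ge 2$, this gives the vector $(3,3)$, whose factor is $2^{1/3}\approx 1.26 > 1.1939$, and the claimed bound would fail. The paper closes this with an observation absent from your proposal: when $|N_U(s_i)|=2$, rule P6 forces the second uncolored neighbor of $s_i$ in \emph{both} branches, so $N_U(s_i)$ is totally painted either way; a short split on the sizes of $N_U(s)$ and $N_U(s')$ then shows the worst achievable vector is $(3,5)$. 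Relatedly, you misplace the binding configuration: the tight vector $(3,5)$ arises in B2 and in B3(a), whereas the subcases of B3(b) --- which do require the refined branching on $v''$ versus $v/v'$ and the weight comparison, as you correctly sense --- all come out strictly better, at $(4,6)$, $(2,7)$ and $(6,6)$. So the delicate verification you postpone is both indispensable and located in a different rule than you expect.
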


{\it Proof}: 

Using the definition \ref{def:Branch} and the theorem \ref{thm:Branch} the calculation of the upper bound time is reduced to calculation
of the {\it branching vector} for each branching rule (i.e. bifurcation rules in our algorithm) and obtain the associated {\it branching factor} for each case.
Then the bound is given by the maximum {\it branching factor}.
Note that to use this we must observe that the {\it reduction rules} (i.e. propagation rules in our algorithm) can be computed in polynomial time and
leads to at most one valid extension of the considered coloring. So, the propagation rules do not affect the exponential factor of the algorithm.
Moreover, each branch of the algorithm has cost $O(n+m)$ in time and space. 
This is easy to note since from the {\it empty} coloring up to any {\it total} coloring each vertex $v$ is painted once and the cost in time incurred 
for painting each vertex is given by the updating of the color of the vertex and updating this information for the neighborhood, hence $
|N(v)|$ times a constant operation for updating a counter with amount of black/white/uncolored neighbors. Therefore, the total cost for each branch is $O(n+m)$.

Let's analyse each bifurcation rule to obtain the maximum {\it branching factor}:

\begin{description}

\item [1.] If $C$ is an {\it empty} coloring: choose an arbitrary vertex $v$ then  $C'_1 := C \cup BLACK(\{v\})$ and $C'_2 := C \cup WHITE(\{v\})$:
			It is easy to see that this rule is executed once, after that, the coloring is never empty again. Since this rule bifurcation opens two branchs
			then we can upper bound the time of the algorithm by $2$ times the complexity of the algorithm executed in an instance of size $n-1$. 
			Therefore the asymptotic behavior of the algorithm is not affected.

\item [2.]	{\it If $\exists$ edge $vw$ s.t. $v \in N_U(s)$ and $v' \in N_U(s')$, for some $s,s' \in S, s \ne s'$ then  
					$C'_1 := C \cup BLACK(\{v\})$ and $C'_2 := C \cup WHITE(\{v\})$.}

					Here we extend the original coloring $C'$ to $C'_1$ and $C'_2$ by coloring the vertex $v$ with black and white respectively. 
					Recall that exists an edge $vw$ such that $v \in N_U(s)$, $w \in N_U(s')$. 
					If $v$ is black then $N_U(s) \setminus v$ are white, while $v'$ is white. 
					On the other hand, if $v$ is white then $w$ is black and $N_U(s') \setminus w$ are white. 
					Therefore the size of uncolored vertices is reduced for each branch (i.e. for each new coloring). 
					The associated branching vector is $(1 + |N_U(s)|, 1 + |N_U(s')|)$.  
					By rule P2 
					$|N_U(s)| \geq 2$ and $|N_U(s')| \geq 2$. 
					The following observation turns out to be useful: 
					If $|N_U(s_i)| = 2$ then $N_U(s_i)$ can be totally painted wether $v$ is black or white. 
					The case with $N_u(s')$ is symmetric. Therefore the branching vector with biggest branching factor is (3,5) $(\tau(3,5)\approx 1.1939)$.

\item [3.] For some $s \in S$, if $\exists v \in N_U(s)$  s.t. $\exists w \in N_T(v)$:

					Note that if $\not \exists w \in N_T(v)$ for any $v \in N_U(s)$ then either the propagating rule P9
					or P5
					can be applied to get an extension of the coloring.

	\begin{description}
			\item [(a)] If $|N_U(s)| \neq 3 \vee d(w) \neq 3 \vee |N_T(v)| \geq 2$ then $C'_1 := C \cup BLACK(\{v\})$ and $C'_2 := C \cup WHITE(\{v\})$.	\\
			
			Since $v$ is uncolored then $w$ is not a {\it pendant} vertex, $d(w)>1$.
			Since $w$ is uncolored then it has nor white nor paired black neighbor. Moreover, if $w$ has a single black neighbor
			then this is the case analyzed above. Therefore $w$ has uncolored neighbors and let $x$ be one of them. \\
			
			\begin{description}
				\item [(a.1)] $|N_T(v)| \geq 2$: Let $v' \in N_T(v)$. 
							In $C'_1$ $\{v,x\}$ will be black while $\{v_1,v',w\}$ will be white. 
							In $C'_2$ $\{v\}$ will be white while $\{v',w\}$ will be black. This lead to the branching vector (3,5).\\
										
				\item [(a.2)] $d(w) \neq 3$. 
							If $d(w)=2$ then in $C'_1$ the vertices $N_U(s) \cup \{w,x\}$ will be colored and in $C'_2$ 
							the vertices $\{v,x\}$ will be black while $\{w\}$ will be white. Therefore the branching vector will be
							at least (3,5).\\
							Else if $d(w) > 3$ then in $C'_1$ the vertices $N_U(s) \cup N_U[w]$ will be colored and in $C'_2$
							the vertices $\{v,w\}$ will be colored. In case $|N_U(s)|=2$ then $v_1$ will be colored too.
							Therefore the branching vector (2,7) $(\tau(2,7)=1.1908)$.\\

				\item [(a.3)] $|N_U(s)|=2$: Let $N_U(s) = \{v,v_1\}$ and $N(w) = \{v,x,x'\}$. 
							In $C'_1$ after applying propagation rules the vertices $\{v,x,x'\}$ will be black while $\{v_1,w\}$ will be white.
							In $C'_2$ after applying propagation rules the vertices $\{v_1,w\}$ will be black while $\{v\}$ will be white.
							The result is the branching vector (3,5).\\
							
				\item [(a.4)] $|N_U(s)|>3$: Let $\{v_1,v_2,v_3\} \in N_U(s)$ and $N(w) = \{v,x,x'\}$. 
						In $C'_1$ after applying propagation rules the vertices $\{v,x,x'\}$ will be black while $\{v_1,v_2,v_3,w\}$ will be white.
						In $C'_2$ after applying propagation rules the vertices $\{w\}$ will be black while $\{v\}$ will be white.
						The result is the branching vector (2,7)\\		

			\end{description}

			\item [(b)] {\it If $|N_U(s)| = 3 \ \wedge \ d(w) = 3$ where $N_U(w)=\{v,x,x'\}, N_U(s) = \{v,v',v''\}$}
			 Note that $\{x,x'\} \cap \{v,v',v''\} = \emptyset$ since otherwise at least one of them must be colored by rule P8.
			 
			 \begin{description}
				
					\item [(b.1)]{\it If $N_U(v') = N_U(v'') = \emptyset$ then \\
								\ \ \ \ $C'_1 := C \cup BLACK(\{v\})$ and $C'_2 := C \cup WHITE(\{v\})$ }:\\ 
								Suppose w.l.o.g. $weight(sv') \leq weight (sv'')$, then \\
								In $C'_1$ after applying propagation rules the vertices $\{v,w',w''\} $ will be black while $\{v',v'',w\} $ will be white.
								In $C'_2$ after applying propagation rules the vertices $\{v',w\} $ will be black while $\{v,v''\} $ will be white.
								The result is the branching vector (4,6) $(\tau(4,6)=1.1510)$.\\

					\item [(b.2)] {\it If $N_U(v') \neq \emptyset$, let $w' \in N_T(v')$, with $w' \neq w$: \\
							  If $|N_T[w] \cup N_T[w']| > 5$} then \\
								\ \ \ \ $C'_1 := C \cup BLACK(\{v\})$ and $C'_2 := C \cup WHITE(\{v\})$\\
								
								Note that if $d(w') \neq 3$ then $v'w'$ satisfies the properties of an already analized case, hence $C'_1 := C \cup BLACK(\{v'\})$
								and $C'_2 := C \cup WHITE(\{v'\})$. 
								
								Since $d(w) = d(w') = 3$ and $|N_T[w] \cup N_T[w']| > 5$, then $\exists x,y$ s.t. $x \in N_T(w), x \notin N_T(w')$ and
								$y \in N_T(w'), y \notin N_T(w)$.
								In $C'_1$ after applying propagation rules the vertices $\{v,x,x',w'\}$ will be black while $ \{v',v'',w\}$ will be white. If $x'=w'$ then $y$ must
								be black by rule P6.
								In $C'_2$ the vertex $\{w\}$ will be black while the vertex $\{v\}$ will be white.
								The result is the branching vector (2,7)\\

					\item [(b.3)] {\it If $N_U(v') \neq \emptyset$, let $w' \in N_T(v')$, $w' \neq w$\\ 
								If $|N_T[w] \cup N_T[w']| \leq 3$ and $z \in N(w) \cap N(w')$} then \\
								\ \ \ \ $C'_1 := C \cup BLACK(\{v''\})$, \\
								if  $weight(sv) + weight(w'z) \leq weight(sv') + weight(wz)$ then \\
								\ \ \ \ $C'_2 := C \cup BLACK(\{v\})$\\
								otherwise $C'_2 := C \cup BLACK(\{v'\})$ \\
								
								Since $d(w) = d(w') = 3$ then $ww' \in E(G)$ and $\exists z \in N_T(v) \cap N_T(w)$, otherwise the case is one of the above.
					
								In both colorings, $C'_1$ and $C'_2$ the vertices $\{v,v',v'',w,w',z\}$ will be colored. The branching vector is (6,6).
								$(\tau(6,6)=1.1225)$.
	
				\end{description}
				
	\end{description}
\end{description}

The worst branching factor is  $\tau(3,5)\approx 1.1939$. In consequence, the time complexity of this algorithm is $O*(1.1939^n)$.

To achieve linear space complexity, we use a stack to store the coloring sequence of the current branch. 
The only additional space is needed for $COLORINGS$ and extra information to restore the initial condition for each coloring. 
For each coloring $c \in COLORINGS$ extended from a bifurcation rule we store the number of colored vertices before the bifurcation, 
the vertex colored during bifurcation and its color. These elements are sufficient to restore the initial condition.
$_\triangle$

The analysis can be extended for the case of non-connected graphs. It is easy to obtain the same upper bound after separating 
the cases where each connected component of four or less vertices is solved in constant time.

\section{Counting the number of DIM's}

The previous algorithm can be easily adapted to count the number of DIM's. The number of DIM's is the number of {\it total valid} colorings.
Given a coloring $C$ we define $TVC(C)$ the number of {\it total valid} colorings that can be extended from $C$.
If we apply any propagation rule to coloring $C$ we obtain a coloring $C'$. Clearly $TVC(C) = TVC(C')$, except for rule P9
. In the
later case $TVC(C) = TVC(C') \cdot |N_U(s)|$ where $s$ is the single vertex chosen to apply the rule.

If we apply any bifurcation rule to coloring $C$ we obtain two extended colorings $C'_1$ and $C'_2$. Clearly $TVC(C) = TVC(C'_1) + TVC(C'_2)$, except for rule B3(b)iii.
In the later case $TVC(C) = TVC(C'_1) + 2 \cdot TVC(C'_2)$.

Using the above facts it is trivial to modify the algorithm to solve the counting problem.

\section{Conclusions}

We have developed a new exact exponential algorithm for an extensively studied problem. Moreover the developed algorithm is practical since there are no big constants
or polynomials hidden in the upper-bound and it is straightforward to implement it. 
\bigskip

\begin{tabular}{|l | l | l |}
\hline
Problem & Previous results & New results\\
\hline
Weighted DIM & $O^*(1.4423^m)$ \cite{Br-Le-Ra,Milanic,Ts}, $O(1.4423^n \cdot m)$ \cite{DBLPMin} & $O(1.1939^n \cdot m)$ \\
\hline
Counting DIM & $O^*(1.3247^m)$ \cite{Br-Le-Ra,Milanic,Dahllof2002}, $O(1.4423^n \cdot m)$ \cite{DBLPMin} & $O(1.1939^n \cdot m)$\\
\hline
\end{tabular}

\end{document}